\def\R{{\mathbb{R}}}
\def\U{{\mathbf{U}}}
\def\A{{\mathbf{A}}}
\def\X{{\mathbf{X}}}
\def\F{{\mathscr{F}}}
\def\G{{\mathcal{G}}}
\def\E{{\mathcal{E}}}
\def\V{{\mathcal{V}}}
\def\S{{\mathcal{S}}}
\def\I{{\mathcal{I}}}
\def\B{{\mathcal{B}}}
\def\W{{\mathcal{W}}}
\newtheorem{theorem}{Theorem}
\newtheorem{definition}{Definition}
\newtheorem{lemma}{Lemma}
\newtheorem{proposition}{Proposition}
\title{SAMPLING OF CORRELATED BANDLIMITED CONTINUOUS SIGNALS BY JOINT TIME-VERTEX GRAPH FOURIER TRANSFORM}
\name{Zhongyi Ni$^\ast$, Feng Ji$^\dagger$, Hang Sheng$^\ast$, Hui Feng$^\ast$, Bo Hu$^\ast$}
\address{$^\ast$ School of Information Science and Engineering, Fudan University, Shanghai 200433, China\\
$^\dagger$ School of Electrical and Electronic Engineering, Nanyang Technological University, Singapore\\
$^\ast\lbrace$20300720007, 20110720036, hfeng, bohu$\rbrace$@fudan.edu.cn, $^\dagger$jifeng@ntu.edu.sg}
\begin{document}

\maketitle
\begin{abstract}

When sampling multiple signals, the correlation between the signals can be exploited to reduce the overall number of samples. In this paper, we study the sampling theory of multiple correlated signals, using correlation to sample them at the lowest sampling rate. Based on the correlation between signal sources, we model multiple continuous-time signals as continuous time-vertex graph signals. The graph signals are projected onto orthogonal bases to remove spatial correlation and reduce dimensions by graph Fourier transform. When the bandwidths of the original signals and the reduced dimension signals are given, we prove the minimum sampling rate required for recovery of the original signals, and propose a feasible sampling scheme.


\end{abstract}
\begin{keywords}
Time-vertex graph signal, sampling theory, graph signal processing
\end{keywords}
%

\section{Introduction}

Sampling of continuous-time signals not only plays a crucial role in digital signal processing systems but also provides significant savings in the cost of signal storage and processing. The theory of perfect sampling and recovery for a bandlimited signal has been extensively studied\cite{shannon,landau}. In many scenarios, we need to observe multiple signal sources at the same time, such as biological research, sensor networks, and social networks. In these cases, the observed continuous-time signals may be correlated with each other, which introduces additional information. That is, the observations from sources at a moment constitute high-dimensional and redundant data. It would be wasteful to still sample each signal according to its bandwidth.

For the sampling theory of multiple signals, \cite{multisignal1974} studied the problem of the exact recovery of multidimensional signals from projections. \cite{corsignal,corsignal2} relate the bandwidth and correlation to the sampling rate required for signal reconstruction. In general, we hope to further reduce the data dimension by correlation. Principal Component Analysis (PCA) is one of the most common ways to reduce the dimension of data by projecting high-dimensional data into an orthogonal space\cite{PCA}. This dimension reduction process is consistent with graph signal processing which adopts the adjacency matrix as its fundamental framework\cite{GFTW}. The covariance matrix is the adjacency matrix, which is used to characterize the correlation between signal sources. Eigenvectors of the adjacency matrix are defined as the graph Fourier bases, which constitute the orthogonal space. We can apply graph Fourier transform (GFT) to the data at each moment of the signal sources and remove the redundant spatial information\cite{emerging2013}. 

Therefore, we use the time-vertex graph signal (TVGS) processing framework to study the sampling of correlated signals\cite{timevertex,ji2019hilbert}. We model the correlated signal sources as a graph, whose each vertex relates to a continuous function of time. In this way, multiple continuous-time signals are modeled as a continuous time-vertex graph signals (CTVGS). There are some research results about the sampling of time-point diagram signals. Based on joint time-vertex Fourier transform (JFT)\cite{2016JFT}, Yu \emph{et al.}\cite{Yu} proposed a joint sampling scheme, which further reduce the amount of samples of the finite time-vertex graph signal. Ji \emph{et al.}\cite{ji2019hilbert} considered the sampling of TVGS with infinite time components and gave an extension of the separate sampling scheme. However, the above works only consider sampling based on the spectrum of the dimension-reducted signals and thus fail to give the minimum sampling rate required for recovery of the CTVGS.

In this paper, the bandlimited signals are modeled as CTVGS, whose low-dimensional representations are obtained by GFT. When the bandwidths of the signals before and after dimension reduction are given, we prove the lowest sampling rate required for signal recovery and propose a specific method to obtain the minimum sample set. Finally, we verified the feasibility of sampling and recovery through simulation experiments.

\section{Model}
\label{sec:model}

An undirected graph can be represented as $\G = (\V, \E, \A)$, where $\V=\{ v_1, \dots, v_N \}$ is the set of $N$ vertices, $\E$ is the set of edges, and $\A$ is an $N \times N$ symmetric weighted adjacency matrix. The weight $a(i,j)$ of the edge $(v_i, v_j) \in \V$ reflects the degree of the relation of vertex $v_i$ and $v_j$. Adjacency matrix can be decomposed as $ \A =\U \mathbf{\Lambda} \U^H $, where $\U=[\mathbf{u}_1, \dots,  \mathbf{u}_N]$ is an orthonormal matrix consisting of eigenvectors and $\mathbf{\Lambda}$ is a diagonal matrix of eigenvalues $\{ \lambda_i \} _{i=1}^N$ corresponding to $\{ \mathbf{u}_i \}$. Given subsets $\V' \subseteq \V$ and $\Lambda ' \subseteq \Lambda$, $\U_{\V',\Lambda'}$ is constructed by extracting the rows of $\U$ corresponding to $\V'$ and the columns of $\U$ corresponding to $\Lambda '$.


A CTVGS can be written in the following vector form
$$ \X_\V(t) = \left[ \begin{matrix} \X(v_1, t) \\ \dots \\ \X(v_N, t) \end{matrix} \right] \in L^2(\V \times \R), $$
where $\X_v(\cdot) := \X(v,\cdot) \in L^2(\R)$ is a signal on vertex $v$, and $\X(\cdot, t)$ is an ordinary graph signal at instant $t$. The GFT expresses a signal in a linear space based on eigenvectors\cite{emerging2013}, \emph{i.e.},
$$\F_G(\X_\V(t))= \U^H \X_\V(t),$$
whose $i$-th element is 
\begin{equation}
\label{eq:GFT_i}
	\F_G(\X_\V, \lambda_i, t) = u_i^H \X_\V(t) \in L^2(\R).
\end{equation}

Let $\mathscr{B}(x)=\sup\{|\Omega |: |\F_T(x,\Omega)|>0\} \in \overline{\R}_+$ be the bandwidth of $x \in L^2(\R)$, where $\F_T(x,\Omega) = \int_\R x(t)e^{-j\Omega t} dt$ is the Fourier transform(FT) of $x$ and $\Omega$ is the analog angular frequency\cite{shannon}. 
\begin{definition}
    Let $\W_{\B_\V,\B_\Lambda} \subseteq L^2(\V \times \R)$ be the space of CTVGS bandlimited to sets $\B_\V = \{ \mathscr{B}(\X_v), v\in \V \}$ and $\B_\Lambda = \{ \mathscr{B}(\F_G(\X_\V, \lambda, t)), \lambda \in \Lambda \}$.
\end{definition}
For any $\X_\V \in \W_{\B_\V,\B_\Lambda}$, we denote the bandwidth on $\X_v$ as $\B_\V(v) = \mathscr{B}(\X_v)$ and the bandwidth on $\F_G(\X_\V, \lambda, t)$ as $\B_\Lambda(\lambda) = \mathscr{B}(\F_G(\X_\V, \lambda, t))$. Particularly, if $\B_\V(v)$ is a constant value for any $v\in \V$, $\W_{\B_\V,\B_\Lambda}$ is a \emph{signal space with equal bandwidth}. If $\B_\Lambda(\lambda) = 0$ or $\B_\Lambda(\lambda) \geq \max_{v\in \V} \B_\V(v)$ for any $\lambda \in \Lambda$, $\W_{\B_\V,\B_\Lambda}$ is a \emph{signal space with simple bandwidth}.

\section{THE MINIMAL SAMPLE RATE THEOREM}
\label{sec:sampling}

In the general space, the elements in $\B_\V$ is often different from each other. The CTVGS in can be easily partitioned into several signals in the space with equal bandwidth by filtering. Therefore, we mainly introduce the sampling of signals in signal space with equal bandwidth in this section. 

A signal space with equal bandwidth $\W_{\B_\V, \B_\Lambda}$ can be divided into a series of subspaces, and finally be reduced to a signal space with simple bandwidth $\W_{\B_{\V},\B^0_\Lambda}$, that is, $\W_{\B_\V, \B^0_\Lambda} \subseteq \W_{\B_\V, \B^1_\Lambda} \subseteq \dots \subseteq \W_{\B_\V, \B_\Lambda}$. In each division, we record the subspace as $\W_{\B_\V, \B^{i-1}_\Lambda}$, whose complement space (with respect to (w.r.t.) the direct sum) is $\W_{v^i_*}$. $\W_{v^i_*}$ is isomorphic to the space of bandlimited signals in $L^2(\R)$ and can be reconstructed from the samples obtained from the signal on a vertex. The signal in the space with simple bandwidth can be sampled and reconstructed directly. $\W_{\B_\V, \B^{i-1}_\Lambda}$ direct sum $\W_{v^i_*}$ to obtain $\W_{\B_\V, \B^i_\Lambda}$, and recursively, the original signal can be perfectly reconstructed.

\subsection{Space Division}

We now raise how to divide $\W_{\B_{\V},\B^i_\Lambda}$, and the rest are done in the same way. Let 
$$ \lambda^i_* =\mathop{\arg\min}\limits_{\lambda \in \Lambda, \ 0<\B^i_\Lambda(\lambda)< \B_\V} \B^i_\Lambda(\lambda). $$  
According to (\ref{eq:GFT_i}), we have a linear map
\begin{eqnarray*}
	Q_{\lambda^i_*}: \W_{\B_\V, \B^i_\Lambda} & \mapsto & L^2(\R) \\
	\X_\V(v,t) & \mapsto & \F_G(\X, \lambda^i_*, t) = u^H_{\lambda^i_*} \X_\V(t).
\end{eqnarray*}

\begin{lemma}
\label{lem:ker}
    Suppose that $\B^{i-1}_\Lambda$ and $\B^i_\Lambda$ take different values only at $\lambda^i_*$ and $\B^{i-1}_\Lambda(\lambda^i_*) = 0$, $\W_{\B_\V, \B^{i-1}_\Lambda} = {\rm ker}(Q_{\lambda^i_*})$, \emph{i.e.}, $\W_{\B_\V, \B^{i-1}_\Lambda}$ is a subspace of $\W_{\B_\V, \B^i_\Lambda}$.
\end{lemma}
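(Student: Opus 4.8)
The plan is to establish the set equality $\W_{\B_\V, \B^{i-1}_\Lambda} = \ker(Q_{\lambda^i_*})$ by a double inclusion, after which the assertion that $\W_{\B_\V, \B^{i-1}_\Lambda}$ is a subspace of $\W_{\B_\V, \B^i_\Lambda}$ is immediate, since the kernel of a linear map is always a subspace of its domain $\W_{\B_\V, \B^i_\Lambda}$. Throughout I read ``bandlimited to'' componentwise, i.e.\ $\X_\V \in \W_{\B_\V, \B_\Lambda}$ iff $\mathscr{B}(\X_v) \leq \B_\V(v)$ for every $v \in \V$ and $\mathscr{B}(\F_G(\X_\V, \lambda, t)) \leq \B_\Lambda(\lambda)$ for every $\lambda \in \Lambda$; this is exactly the reading under which $\W$ is a linear space (bandwidths of sums and scalar multiples do not exceed the componentwise bounds). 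By hypothesis $\B^{i-1}_\Lambda$ and $\B^i_\Lambda$ agree at every $\lambda \neq \lambda^i_*$ and differ only in that $\B^{i-1}_\Lambda(\lambda^i_*) = 0 \leq \B^i_\Lambda(\lambda^i_*)$, so the $(i-1)$-constraints are a tightening of the $i$-constraints and $\W_{\B_\V, \B^{i-1}_\Lambda} \subseteq \W_{\B_\V, \B^i_\Lambda}$; in particular $Q_{\lambda^i_*}$ is defined on every element I will need to test.

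The one analytic fact I would isolate first is that a signal $x \in L^2(\R)$ with $\mathscr{B}(x) = 0$ is identically zero. Indeed, $\mathscr{B}(x) = 0$ forces $\F_T(x, \Omega) = 0$ for all $\Omega \neq 0$, so the Fourier transform is supported on the single point $\{0\}$, a set of measure zero; hence $\F_T(x, \cdot) = 0$ in $L^2(\R)$ and, by Plancherel, $x = 0$. This equivalence --- that the component $\F_G(\X_\V, \lambda^i_*, t)$ having bandwidth $0$ is the same as that component vanishing as an element of $L^2(\R)$ --- is the linchpin connecting the band specification $\B^{i-1}_\Lambda(\lambda^i_*) = 0$ to the kernel condition $Q_{\lambda^i_*}(\X_\V) = 0$, and it is where I expect to have to be most careful about the degenerate zero-signal edge case and about the convention for $\mathscr{B}$ on the zero function.

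With this in hand the two inclusions are routine. For $\W_{\B_\V, \B^{i-1}_\Lambda} \subseteq \ker(Q_{\lambda^i_*})$, I take any $\X_\V$ in the smaller space: it already lies in $\W_{\B_\V, \B^i_\Lambda}$, and the constraint $\mathscr{B}(\F_G(\X_\V, \lambda^i_*, t)) \leq \B^{i-1}_\Lambda(\lambda^i_*) = 0$ forces $\F_G(\X_\V, \lambda^i_*, t) = 0$, i.e.\ $Q_{\lambda^i_*}(\X_\V) = 0$. For the reverse inclusion I take $\X_\V \in \ker(Q_{\lambda^i_*}) \subseteq \W_{\B_\V, \B^i_\Lambda}$, so $\F_G(\X_\V, \lambda^i_*, t) = 0$, and then verify every defining constraint of $\W_{\B_\V, \B^{i-1}_\Lambda}$: the vertex bandwidths $\mathscr{B}(\X_v) \leq \B_\V(v)$ are inherited unchanged from membership in $\W_{\B_\V, \B^i_\Lambda}$; for $\lambda \neq \lambda^i_*$ the spectral bounds coincide, $\B^{i-1}_\Lambda(\lambda) = \B^i_\Lambda(\lambda)$, so they hold automatically; and at $\lambda^i_*$ the vanishing component has bandwidth $0 = \B^{i-1}_\Lambda(\lambda^i_*)$. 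Hence $\X_\V \in \W_{\B_\V, \B^{i-1}_\Lambda}$, the two sets coincide, and the subspace assertion follows.
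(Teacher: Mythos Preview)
The paper states this lemma without proof, so there is no authorial argument to compare against. Your proof is correct and is exactly the natural one: the only nontrivial point is the equivalence between $\mathscr{B}\bigl(\F_G(\X_\V,\lambda^i_*,\cdot)\bigr)=0$ and $\F_G(\X_\V,\lambda^i_*,\cdot)=0$ in $L^2(\R)$, which you justify via Plancherel, and the rest is bookkeeping of the defining constraints under the componentwise ``$\le$'' reading of bandlimitedness. Your remark that this reading is forced if $\W_{\B_\V,\B_\Lambda}$ is to be a linear space is apt, since the paper's Definition~1 is phrased with equalities but the whole Section~3 argument (kernels, complements, direct sums) only makes sense for genuine subspaces.
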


Intuitively, for each $\W_{\B_\V, \B^i_\Lambda}$, we find a $\lambda^i_*$ and get the subspace $\W_{\B_\V, \B^{i-1}_\Lambda}$ by changing $\B^i_\Lambda(\lambda^i_*)$ to 0. After finite steps, we must be able to reduce $\W_{\B_\V, \B_\Lambda}$ to $\W_{\B_\V, \B^0_\Lambda}$ with $\B^0_\Lambda(\lambda) = 0$ or $\B^0_\Lambda(\lambda) \geq \max_{v\in \V} \B_\V(v)$.

Then we consider the complement space of $\W_{\B_\V, \B^{i-1}_\Lambda}$. 

\begin{definition}
    For $\W_{\B_\V, \B^i_\Lambda}$, let $\Lambda^i_0=\{\lambda \in \Lambda : \B^i_\Lambda(\lambda) = 0 \}$. If there is a subset $\V' \subseteq \V$ such that $|\V'|+|\Lambda^i_0| = |\V| = N$ and $\U_{\V'^c,\Lambda^i_0}$ is invertible, $\V'$ is a uniqueness set w.r.t. $\Lambda^i_0$. The collection of uniqueness sets w.r.t. $\Lambda^i_0$ is denoted as $\mathcal{U}(\Lambda^i_0)$. If $\Lambda^i_0 = \emptyset $, let $\mathcal{U}(\Lambda^i_0) = \{ \V \}$.
\end{definition}

For any $\lambda \in \Lambda^i_0$, we have $\B^i_\Lambda(\lambda)=0$ and $\F_G(\X, \lambda, t) = u_{\lambda}^H \X_\V(t) = 0$. Thus
$$ \U_{\V, \Lambda^i_0}^H \X_\V(t) = 0, $$
Let $\V^i_0 \in \mathcal{U}(\Lambda^i_0)$, we obtain
$$ \U_{\V^i_0, \Lambda^i_0}^H \X_{\V^i_0}(t) + \U_{{\V^i_0}^c, \Lambda^i_0}^H \X_{{\V^i_0}^c}(t) = 0, $$
$$ \X_{{\V^i_0}^c}(t) = -(\U_{{\V^i_0}^c,\Lambda^i_0}^H)^{-1} \U_{\V^i_0,\Lambda^i_0}^H \X_{\V^i_0}(t), $$
\begin{equation}
\label{eq:uniq}
	\X_\V(t) = \left[ \begin{matrix} \X_{\V^i_0}(t) \\ \X_{{\V^i_0}^c}(t) \end{matrix} \right] = \left[ \begin{matrix} I \\ -(\U_{{\V^i_0}^c,\Lambda^i_0}^H)^{-1} \U_{\V^i_0,\Lambda^i_0}^H \end{matrix} \right] \X_{\V^i_0}(t) 
\end{equation}

As long as the signals in the uniqueness set are given, $\X_\V$ can be completely determined. That is why we call $\V^i_0$ a uniqueness set. Then $\F_G(\X, \lambda^i_*, t)$ can be written as
\begin{equation}
\label{eq:lambda}
    \F_G(\X, \lambda^i_*, t) = u^H_{\lambda^i_*} \X_\V(t) = E_{\V^i_0} \X_{\V^i_0}(t),
\end{equation}
where $E_{\V^i_0} = \left[ \begin{matrix} I \\ -(\U_{{\V^i_0}^c,\Lambda^i_0}^H)^{-1} \U_{\V^i_0,\Lambda^i_0}^H \end{matrix} \right]$ is a row vector with dimension $|\V^i_0|$. For $v \in \V^i_0$, $E_{\V^i_0}(v)$ is the element corresponding to vertex $v$ in $E_{\V^i_0}$.

In fact, the complement space of ${\rm ker}(Q_{\lambda^i_*})$ is isomorphic to ${\rm im}(Q_{\lambda^i_*})$.

\begin{lemma}
\label{lem:Yi}
    Let $\W_{\B_\V, \B^{i-1}_\Lambda} = {\rm ker}(Q_{\lambda^i_*}) \subseteq \W_{\B_\V, \B^i_\Lambda}$ and $Y_i$ be the complement space of $\W_{\B_\V, \B^{i-1}_\Lambda}$ w.r.t. $\W_{\B_\V, \B^i_\Lambda}$. Then $ Y_i \cong {\rm im}(Q_{\lambda^i_*})$ is the subspace of $L^2(\R)$ with bandwidth $\B^i_\Lambda(\lambda^i_*)$.
\end{lemma}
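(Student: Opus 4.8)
The plan is to prove the two assertions in turn: the linear isomorphism $Y_i \cong {\rm im}(Q_{\lambda^i_*})$, and the identification of ${\rm im}(Q_{\lambda^i_*})$ as the bandlimited subspace $\{x \in L^2(\R) : \mathscr{B}(x) \le \B^i_\Lambda(\lambda^i_*)\}$ of $L^2(\R)$.

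For the isomorphism I would appeal to the first isomorphism theorem in the form suited to a chosen algebraic complement. By Lemma \ref{lem:ker} we have ${\rm ker}(Q_{\lambda^i_*}) = \W_{\B_\V, \B^{i-1}_\Lambda}$, and since $Y_i$ is a direct-sum complement, $\W_{\B_\V, \B^i_\Lambda} = \W_{\B_\V, \B^{i-1}_\Lambda} \oplus Y_i$. The natural candidate for the isomorphism is the restriction $Q_{\lambda^i_*}|_{Y_i}$. Injectivity is immediate since $Y_i \cap {\rm ker}(Q_{\lambda^i_*}) = \{0\}$; to obtain surjectivity onto the image, I would take any $\X_\V$ with $Q_{\lambda^i_*}(\X_\V)$ in the image, split it as $\X_\V = \X_0 + \X_Y$ along the direct sum with $\X_0 \in {\rm ker}(Q_{\lambda^i_*})$ and $\X_Y \in Y_i$, and note $Q_{\lambda^i_*}(\X_Y) = Q_{\lambda^i_*}(\X_\V)$. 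Hence $Q_{\lambda^i_*}|_{Y_i} : Y_i \to {\rm im}(Q_{\lambda^i_*})$ is a bijective linear map, giving the isomorphism.

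For the image, the inclusion ${\rm im}(Q_{\lambda^i_*}) \subseteq \{x : \mathscr{B}(x) \le \B^i_\Lambda(\lambda^i_*)\}$ is built into the definition of $\W_{\B_\V, \B^i_\Lambda}$: every signal in this space has its $\lambda^i_*$-th GFT component $\F_G(\X, \lambda^i_*, t)$ bandlimited to $\B^i_\Lambda(\lambda^i_*)$. The substantive direction is the reverse inclusion, which I would settle by an explicit construction. Given $x \in L^2(\R)$ with $\mathscr{B}(x) \le \B^i_\Lambda(\lambda^i_*)$, define $\X_\V(t) = u_{\lambda^i_*}\, x(t)$, i.e.\ the inverse GFT of the spectrum that carries $x$ in the $\lambda^i_*$ slot and $0$ elsewhere. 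Then $Q_{\lambda^i_*}(\X_\V) = u^H_{\lambda^i_*} u_{\lambda^i_*}\, x(t) = x(t)$ by orthonormality of the eigenvectors.

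The step that needs the most care --- and the only real obstacle --- is checking that this $\X_\V$ genuinely lies in $\W_{\B_\V, \B^i_\Lambda}$. The spectral constraints are clear: the $\lambda^i_*$ component has bandwidth $\le \B^i_\Lambda(\lambda^i_*)$ and every other component vanishes. The vertex constraints demand that each $\X_v(t) = u_{\lambda^i_*}(v)\, x(t)$ have bandwidth at most $\B_\V$, and here the defining choice $\lambda^i_* = \mathop{\arg\min}\limits_{\lambda \in \Lambda,\ 0 < \B^i_\Lambda(\lambda) < \B_\V} \B^i_\Lambda(\lambda)$ is exactly what is needed, since it forces $\mathscr{B}(\X_v) \le \mathscr{B}(x) \le \B^i_\Lambda(\lambda^i_*) < \B_\V$. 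This verifies $\X_\V \in \W_{\B_\V, \B^i_\Lambda}$, completes the reverse inclusion, and identifies ${\rm im}(Q_{\lambda^i_*})$ with the subspace of $L^2(\R)$ of bandwidth $\B^i_\Lambda(\lambda^i_*)$.
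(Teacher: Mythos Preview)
Your argument is correct. The first-isomorphism-theorem style computation for $Q_{\lambda^i_*}|_{Y_i}$ is exactly the right way to realize the isomorphism, and your explicit preimage $\X_\V(t) = u_{\lambda^i_*}\,x(t)$ for the reverse inclusion is the natural construction; the key observation that $\B^i_\Lambda(\lambda^i_*) < \B_\V$ (built into the definition of $\lambda^i_*$) is precisely what makes the vertex-bandwidth constraint go through, and you flag this correctly.

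As for comparison with the paper: there is nothing to compare against, since the paper does not prove this lemma in-text but simply cites Proposition~1 of \cite{ji2020sampling}. Your self-contained argument is therefore strictly more informative than what appears here. One small presentational remark: when you verify the spectral constraints for your constructed $\X_\V$, you might note explicitly that orthonormality of $\{u_\lambda\}$ gives $\F_G(\X_\V,\lambda,t) = u_\lambda^H u_{\lambda^i_*}\,x(t) = 0$ for every $\lambda \neq \lambda^i_*$, so that \emph{all} entries of $\B^i_\Lambda$ are respected, not only the one at $\lambda^i_*$. This is implicit in what you wrote but worth one line.
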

\begin{proof}
  See Proposition 1 in \cite{ji2020sampling} for details.
\end{proof}

By Lemma \ref{lem:Yi}, we prove that $Y_i$ is isomorphic to ${\rm im}(Q_{\lambda^i_*})$ and determine its bandwidth. The complement space actually used for sampling can then be determined.

\begin{lemma}
\label{lem:im}
    For $v^i_* \in \V^i_0 \in \mathcal{U}(\Lambda^i_0)$ and $E_{\V^i_0}(v^i_*) \ne 0$, let $\W_{v^i_*} = \left\{ \X \in \W_{\B_\V, \B^{i}_\Lambda}: \X_v=0, \forall v \in \V^i_0 \backslash \{v^i_*\} \right\} $ be the complement space of $\W_{\B_\V, \B^{i-1}_\Lambda}$ w.r.t. $\W_{\B_\V, \B^i_\Lambda}$.
\end{lemma}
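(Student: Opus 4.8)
The plan is to show that $\W_{v^i_*}$ is a genuine direct-sum complement of $\W_{\B_\V, \B^{i-1}_\Lambda} = {\rm ker}(Q_{\lambda^i_*})$ inside $\W_{\B_\V, \B^i_\Lambda}$, i.e. that $\W_{\B_\V, \B^i_\Lambda} = {\rm ker}(Q_{\lambda^i_*}) \oplus \W_{v^i_*}$. I would reduce this to a single statement about $Q_{\lambda^i_*}$: a subspace is a complement of ${\rm ker}(Q_{\lambda^i_*})$ exactly when the restriction of $Q_{\lambda^i_*}$ to it is a linear isomorphism onto ${\rm im}(Q_{\lambda^i_*})$. By Lemma~\ref{lem:Yi}, ${\rm im}(Q_{\lambda^i_*})$ is the bandlimited subspace of $L^2(\R)$ of bandwidth $\B^i_\Lambda(\lambda^i_*)$, so it suffices to prove that $Q_{\lambda^i_*}|_{\W_{v^i_*}}$ is injective and surjects onto that subspace.

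First I would make $\W_{v^i_*}$ explicit. If $\X \in \W_{v^i_*}$, then $\X_{\V^i_0}(t)$ has every component zero except the one at $v^i_*$, so by (\ref{eq:uniq}) the whole signal $\X_\V$ is determined by the single scalar function $\X_{v^i_*}\in L^2(\R)$, and (\ref{eq:lambda}) collapses to
\begin{equation*}
    Q_{\lambda^i_*}(\X) = \F_G(\X, \lambda^i_*, t) = E_{\V^i_0}(v^i_*)\, \X_{v^i_*}(t).
\end{equation*}
Injectivity is then immediate from the hypothesis $E_{\V^i_0}(v^i_*)\neq 0$: if $Q_{\lambda^i_*}(\X)=0$ then $\X_{v^i_*}=0$, hence $\X_{\V^i_0}=0$ and $\X_\V=0$ by (\ref{eq:uniq}). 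In particular $\W_{v^i_*}\cap{\rm ker}(Q_{\lambda^i_*})=\{0\}$.

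The substantive step is surjectivity. Given a target $y\in{\rm im}(Q_{\lambda^i_*})$ with $\mathscr{B}(y)\le \B^i_\Lambda(\lambda^i_*)$, I would set $\X_{v^i_*}:=y/E_{\V^i_0}(v^i_*)$, set the remaining components of $\X_{\V^i_0}$ to zero, and define $\X_\V$ by (\ref{eq:uniq}); by construction this candidate lies in $\W_{v^i_*}$ and satisfies $Q_{\lambda^i_*}(\X)=y$. The real work is to confirm $\X\in\W_{\B_\V, \B^i_\Lambda}$, i.e. that $\X$ respects every bandwidth constraint. The key observation is that each coordinate $\X_v$ and each transform $\F_G(\X, \lambda, t)=u_\lambda^H\X_\V(t)$ is a fixed scalar multiple of $\X_{v^i_*}$, hence has bandwidth at most $\mathscr{B}(\X_{v^i_*})=\mathscr{B}(y)\le \B^i_\Lambda(\lambda^i_*)$. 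The minimality in the definition $\lambda^i_*=\mathop{\arg\min}_{0<\B^i_\Lambda(\lambda)<\B_\V}\B^i_\Lambda(\lambda)$ is exactly what turns this bound into feasibility: it forces $\B^i_\Lambda(\lambda^i_*)<\B_\V$, covering the vertex constraints in the equal-bandwidth space, and $\B^i_\Lambda(\lambda^i_*)\le \B^i_\Lambda(\lambda)$ for every $\lambda$ with $\B^i_\Lambda(\lambda)>0$, covering the spectral constraints. The remaining coordinates $\lambda\in\Lambda^i_0$ (those with $\B^i_\Lambda(\lambda)=0$) are handled automatically, since (\ref{eq:uniq}) was derived precisely to enforce $\U_{\V,\Lambda^i_0}^H\X_\V=0$, so $\F_G(\X, \lambda, t)=0$ there.

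Combining the two parts, $Q_{\lambda^i_*}|_{\W_{v^i_*}}$ is an isomorphism onto ${\rm im}(Q_{\lambda^i_*})$; together with ${\rm ker}(Q_{\lambda^i_*})=\W_{\B_\V, \B^{i-1}_\Lambda}$ from Lemma~\ref{lem:ker}, the kernel-image splitting yields the claimed direct sum. I expect the surjectivity/feasibility step to be the main obstacle, since it is the only place where all the bandwidth constraints must be checked at once; the extremal choice of $\lambda^i_*$ and the separate treatment of the zero-bandwidth coordinates $\Lambda^i_0$ are the ideas that make it go through.
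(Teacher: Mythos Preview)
Your argument is correct and self-contained. The paper itself does not give a proof of this lemma; it simply writes ``See \cite{ji2020sampling} for details,'' so there is no in-paper argument to compare against at the level of detail. Your approach---showing that $Q_{\lambda^i_*}$ restricts to an isomorphism $\W_{v^i_*}\to{\rm im}(Q_{\lambda^i_*})$ and invoking the standard kernel--image splitting---is exactly the natural route, and the feasibility check you flag as the ``substantive step'' is indeed where the argument has content. One small remark: when you assert $\B^i_\Lambda(\lambda^i_*)\le \B^i_\Lambda(\lambda)$ for every $\lambda$ with $\B^i_\Lambda(\lambda)>0$, note that the minimality in the definition of $\lambda^i_*$ is taken only over those $\lambda$ with $0<\B^i_\Lambda(\lambda)<\B_\V$; for $\lambda$ with $\B^i_\Lambda(\lambda)\ge \B_\V$ the inequality still holds because $\B^i_\Lambda(\lambda^i_*)<\B_\V$, so the conclusion stands, but it is worth saying explicitly.
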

\begin{proof}
  See \cite{ji2020sampling} for details.
\end{proof}

Intuitively, we can find the subspace $\W_{v^i_*}$ of $\W_{\B_\V, \B^i_\Lambda}$, such that $\W_{v^i_*} = Y_i$. For a CTVGS in $\W_{v^i_*}$, the signals in $\V^i_0$ are zero except the signal on $v^i_*$. Since $\V^i_0$ is a uniqueness set, the signals in ${\V^i_0}^c$ can be determined.

To sum up, in subspace $\W_{\B_\V, \B^{i-1}_\Lambda}$, the sampling and reconstruction of $\X_\V$ can perfectly recover the signals on all vertices in $\V^i_0$ except $v^i_*$, which plus the reconstructed signal in the complement space $\W_{v^i_*}$ to get the signal component in $\W_{\B_\V, \B^i_\Lambda}$. At this time, $\X_{v^i_*}$ is completely determined. \emph{It is worth noting that $\W_{\B_\V, \B^{i-1}_\Lambda}$ and $\W_{v^i_*}$ are not orthogonal here.} Because in $\W_{\B_\V, \B^{i-1}_\Lambda}$, the signal on $v^i_*$ can be determined by the signals on the vertices in $\V^i_0 \backslash \{v^i_* \}$, and it is often not zero. Before reconstructing the signal in the complementary space $\W_{v^i_*}$, the signal component in $\W_{\B_\V, \B^{i-1}_\Lambda}$ should be subtracted first.

\subsection{Sampling in Space with Equal Bandwidth}

It is useful to define a sampling rate for dealing with a subset of $\V \times \R$.

\begin{definition}
    Suppose the sampling set $\S$ is a countable discrete subset of $\V \times \R$. The sampling rate $r(\S)$ is defined as
    $$ r(\S) = \limsup \limits_{t \rightarrow \infty} \frac{|\S \cap (\V \times \left[-t,t  \right])|}{2t}. $$
\end{definition}

Since the original CTVGS $\X$ can be divided into a signal space with simple bandwidth, we first give a sampling theorem for signal component in space with simple bandwidth.

\begin{theorem}
    The minimum sampling rate required for perfect recovery of the signal in signal space with simple bandwidth is
    $$ r^*_0 = 2(N-|\Lambda^0_0|) \B_\V, $$
    where $\B_\V$ is a constant number, because $\B_\V(v)$ is equal for all $v$ here.
\end{theorem}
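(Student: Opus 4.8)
The plan is to prove the claim by matching an explicit achievable scheme with a converse lower bound, after first reducing the simple-bandwidth space to a direct sum of ordinary bandlimited channels. In the simple-bandwidth space $\W_{\B_\V, \B^0_\Lambda}$ the components at frequencies in $\Lambda^0_0$ vanish, i.e. $\F_G(\X_\V, \lambda, t) = u_\lambda^H \X_\V(t) = 0$ for all $\lambda \in \Lambda^0_0$, so that $\U_{\V, \Lambda^0_0}^H \X_\V(t) = 0$. First I would fix a uniqueness set $\V^0_0 \in \mathcal{U}(\Lambda^0_0)$, which has cardinality $|\V^0_0| = N - |\Lambda^0_0|$, and invoke (\ref{eq:uniq}) to write the whole CTVGS as a fixed linear image of $\X_{\V^0_0}(t)$. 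Because every vertex signal carries the same bandwidth $\B_\V$, each of the $N - |\Lambda^0_0|$ coordinates of $\X_{\V^0_0}$ is an ordinary function bandlimited to $\B_\V$, and the nonzero components $\F_G(\X_\V, \lambda, t)$ with $\lambda \notin \Lambda^0_0$ likewise have bandwidth exactly $\B_\V$. Thus $\W_{\B_\V, \B^0_\Lambda}$ is isomorphic to the direct sum of $N - |\Lambda^0_0|$ copies of the Paley--Wiener space of bandwidth $\B_\V$.

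For achievability I would sample each of the $N - |\Lambda^0_0|$ signals $\X_v$, $v \in \V^0_0$, on its own uniform grid at the classical Nyquist rate $2\B_\V$, so that the sampling set $\S \subseteq \V^0_0 \times \R$ has rate $r(\S) = 2(N-|\Lambda^0_0|)\B_\V$. Each $\X_v$, $v \in \V^0_0$, is then recovered by Shannon interpolation, and the remaining coordinates $\X_{{\V^0_0}^c}$ are obtained from (\ref{eq:uniq}); hence $r^*_0 \le 2(N-|\Lambda^0_0|)\B_\V$. This direction is essentially routine once the reduction above is in place.

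The converse is the main obstacle. Here each admissible sample observes the value $\X(v,t) = \sum_{\lambda \notin \Lambda^0_0} u_\lambda(v)\,\F_G(\X_\V, \lambda, t)$, a linear combination of the $N - |\Lambda^0_0|$ effective bandlimited signals rather than a direct reading of any one of them; this is a multichannel sampling situation, with the effective signals as inputs and the vertex signals as outputs. I would show that stable recovery forces $r(\S) \ge 2(N-|\Lambda^0_0|)\B_\V$ by a Landau-type degrees-of-freedom count: in a window $\V \times [-t,t]$ the effective inputs carry a number of independent degrees of freedom asymptotic to $2(N-|\Lambda^0_0|)\B_\V \cdot 2t$, and any sampling set that determines them stably must supply at least that many samples in the window; dividing by $2t$ and letting $t \to \infty$ yields the bound. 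The delicate point is to rule out that distributing samples cleverly across the $N$ vertices could ever beat the aggregate Nyquist density, since each sample mixes all the channels through the eigenvector entries $u_\lambda(v)$; this is precisely where the underlying multichannel sampling bound is needed, and the estimate must be made uniform in $t$ so that the $\limsup$ in the definition of $r(\S)$ is controlled.

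Combining the achievable rate with the converse gives $r^*_0 = 2(N-|\Lambda^0_0|)\B_\V$.
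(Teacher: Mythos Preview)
The paper does not give its own proof of this theorem; it simply cites Theorem~1(b) of \cite{ji2020sampling}. Your plan---reduce via a uniqueness set $\V^0_0$ of size $N-|\Lambda^0_0|$ to a direct sum of ordinary bandlimited channels, achieve the rate by per-vertex Nyquist sampling on $\V^0_0$ and recover the rest through (\ref{eq:uniq}), then get the converse from a multichannel Landau/MIMO density bound---is the natural route and is consistent with what the authors evidently intended: the \LaTeX\ source contains, inside an \texttt{\textbackslash iffalse} block immediately preceding this theorem, exactly the MIMO lower bound $r(\S)\ge\sum_{r=1}^R 2B_r$ (attributed to \cite{MIMO}) that you invoke for the converse.

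Your caution about the converse is well placed: each vertex sample is a linear mixture of the $N-|\Lambda^0_0|$ effective inputs, so a per-channel Landau argument is not enough, and the nontrivial content is precisely that spreading samples across the $N$ outputs cannot beat the aggregate Nyquist density. That step genuinely requires the external multichannel result you point to; once it is granted, your outline is complete and matches the paper's (deferred) approach.
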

\begin{proof}
  See Theorem 1 (b) in \cite{ji2020sampling} for details.
\end{proof}

We can recover signal $\X^{i-1}$ in $\W_{\B_\V, \B^{i-1}_\Lambda}$. A signal on $v^i_*$ satisfying Lemma \ref{lem:im} is sampled, and then $\X^{i-1}_{v^i_*}$ is subtracted to recover $\X'_{v^i_*}$. From the definition of uniqueness set, $\X'$ in $\W_{v^i_*}$ can be determined. Thus we have $\X^i = \X^{i-1} + \X'$. That is, we should ensure that we can find the appropriate $v^i_*$ for each space division. To this end, we first give the following definition.

\begin{definition}
\label{def:sequ}
    A sequence $\V_0 \subset \V_1 \subset \cdots \subset \V_k \subseteq \V$ is called an admissible sequence if the following holds:
    \begin{itemize}
        \item [1)]For any $0 \leq i \leq k$, $\V_i$ is the uniqueness set w.r.t. $\Lambda^i_0$.
        \item [2)]$\V_i \backslash \V_{i-1} $ is a singleton $\{ v^i_* \}$.
    \end{itemize}
\end{definition}

When the Definition \ref{def:sequ} 2) is satisfied, the condition of Lemma \ref{lem:im} is naturally true. To prove this conclusion, we need the following lemma.

\begin{lemma}
\label{lem:matrix}
    If the invertible matrix $P$ has a block form:
    $$ P =\begin{bmatrix}
		P' & \alpha \\
		\beta^H & a
	\end{bmatrix}$$
	where $P'$ is a matrix whose dimension is one smaller than $P$, $\alpha$ and $\beta$ are column vectors, $a$ is a complex number. If $P'$ is invertible, then $a - \beta^H {P'}^{-1} \alpha \ne 0$.
\end{lemma}
\begin{proof}
  Perform elementary row transforms on matrix $P$,
  $$\begin{bmatrix}
		P' & \alpha \\
		\beta^H & a
	\end{bmatrix}
	\rightarrow
	\begin{bmatrix}
		P' & \alpha \\
		0 & a-\beta^H{P'}^{-1}\alpha
	\end{bmatrix} $$
	The elementary row transformation does not change the determinant of the matrix, so
	\begin{eqnarray*}
	    0 \ne {\rm det}(P) & = & {\rm det}\left(\begin{bmatrix}
		P' & \alpha \\
		0 & a-\beta^H{P'}^{-1}\alpha \end{bmatrix}  \right) \\
		& = & {\rm det}(P')(a-\beta^H{P'}^{-1}\alpha)
	\end{eqnarray*}
	Since $P'$ is invertible, $a - \beta^H {P'}^{-1} \alpha \ne 0$.
\end{proof}

\begin{proposition}
    Let $\V_{i} = \V_{i-1} \cup \{ v^i_*\} $ and $\Lambda^{i-1}_0=\Lambda^i_0 \cup \{\lambda^i_* \}$, where $\V_{i-1}$ is the uniqueness set w.r.t. $\Lambda^{i-1}_0$, $\V_i$ is the uniqueness set w.r.t. $\Lambda^i_0$. Then we have $E_{\V_i}(v^i_*) \ne 0$.
\end{proposition}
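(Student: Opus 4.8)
The plan is to express the scalar $E_{\V_i}(v^i_*)$ as a Schur complement of an invertible matrix and then invoke Lemma~\ref{lem:matrix}. Writing out (\ref{eq:lambda}) with its leading factor $u^H_{\lambda^i_*}$, the row vector is $E_{\V_i} = u^H_{\lambda^i_*}\left[\begin{matrix} I \\ -(\U_{\V_i^c,\Lambda^i_0}^H)^{-1}\U_{\V_i,\Lambda^i_0}^H \end{matrix}\right]$, indexed by $\V_i$. Extracting the column corresponding to $v^i_* \in \V_i$, its entry is $E_{\V_i}(v^i_*) = \overline{u_{\lambda^i_*}(v^i_*)} - \U_{\V_i^c,\{\lambda^i_*\}}^H (\U_{\V_i^c,\Lambda^i_0}^H)^{-1} \U_{\{v^i_*\},\Lambda^i_0}^H$, where $u_{\lambda^i_*}(v^i_*)$ is the $(v^i_*,\lambda^i_*)$-entry of $\U$. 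The first thing I would record are the two set identities that drive everything: since $v^i_* \notin \V_{i-1}$ we have $\V_{i-1}^c = \V_i^c \cup \{v^i_*\}$, and by hypothesis $\Lambda^{i-1}_0 = \Lambda^i_0 \cup \{\lambda^i_*\}$.

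Next I would exhibit the block structure. Listing $v^i_*$ last among the rows $\V_{i-1}^c$ and $\lambda^i_*$ last among the columns $\Lambda^{i-1}_0$, the matrix $\U_{\V_{i-1}^c,\Lambda^{i-1}_0}$ takes exactly the form of $P$ in Lemma~\ref{lem:matrix}, with $P' = \U_{\V_i^c,\Lambda^i_0}$, $\alpha = \U_{\V_i^c,\{\lambda^i_*\}}$, $\beta^H = \U_{\{v^i_*\},\Lambda^i_0}$ and $a = u_{\lambda^i_*}(v^i_*)$. The uniqueness-set hypotheses supply both invertibilities for free: $P = \U_{\V_{i-1}^c,\Lambda^{i-1}_0}$ is invertible because $\V_{i-1}$ is a uniqueness set w.r.t. $\Lambda^{i-1}_0$, and $P' = \U_{\V_i^c,\Lambda^i_0}$ is invertible because $\V_i$ is a uniqueness set w.r.t. $\Lambda^i_0$. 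Hence Lemma~\ref{lem:matrix} applies and yields $a - \beta^H {P'}^{-1}\alpha \ne 0$.

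Finally I would reconcile the two expressions. The Schur complement of Lemma~\ref{lem:matrix} reads $a - \beta^H{P'}^{-1}\alpha = u_{\lambda^i_*}(v^i_*) - \U_{\{v^i_*\},\Lambda^i_0}(\U_{\V_i^c,\Lambda^i_0})^{-1}\U_{\V_i^c,\{\lambda^i_*\}}$, written in terms of $\U$, whereas $E_{\V_i}(v^i_*)$ is written in terms of $\U^H$. A short computation using scalar conjugation ($\bar s = s^H$ for a $1\times 1$ quantity, together with $((B^H)^{-1})^H = B^{-1}$) shows that $\overline{E_{\V_i}(v^i_*)}$ is precisely this Schur complement. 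Since a number vanishes iff its conjugate does, $E_{\V_i}(v^i_*) \ne 0$ follows at once.

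The main obstacle is purely bookkeeping: keeping the conjugate transposes straight so that the $v^i_*$-entry of the row vector $E_{\V_i}$ is correctly matched, up to conjugation, with the lower-right Schur complement of $\U_{\V_{i-1}^c,\Lambda^{i-1}_0}$. Once the block placement (both $v^i_*$ and $\lambda^i_*$ put in the last slot) and the conjugation identity are pinned down, the nonvanishing is immediate from Lemma~\ref{lem:matrix}; no analytic estimate is involved.
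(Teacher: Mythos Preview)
Your proposal is correct and follows essentially the same approach as the paper: block-decompose $\U_{\V_{i-1}^c,\Lambda^{i-1}_0}$ with $\U_{\V_i^c,\Lambda^i_0}$ as the principal block, invoke Lemma~\ref{lem:matrix} to get the Schur complement nonzero, and identify $E_{\V_i}(v^i_*)$ (up to conjugation) with that Schur complement. If anything, you are more explicit than the paper about the conjugation step linking $a-\beta^H{P'}^{-1}\alpha$ to $E_{\V_i}(v^i_*)=a^H-\alpha^H({P'}^H)^{-1}\beta$, which the paper asserts without comment.
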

\begin{proof}
  From the definition of uniqueness set, we know that $\U_{\V^c_i, \Lambda^{i}_0}$ and $\U_{\V^c_{i-1}, \Lambda^{i-1}_0}$ are invertible. Suppose that
  $$\U_{\V^c_{i-1}, \Lambda^{i-1}_0} = \begin{bmatrix}
		\U_{\V^c_i, \Lambda^{i}_0} & \alpha \\
		\beta^H & a	\end{bmatrix}, $$
  According to Lemma \ref{lem:matrix}, $a-\beta^H \U^{-1}_{\V^c_i, \Lambda^{i}_0} \alpha \ne 0$. Let 
  $$\U = \begin{bmatrix}
		\U_{\V^c_i, \Lambda^{i}_0} & \alpha & B \\
		\beta^H & a & \phi^H \\
		C & \psi & D \end{bmatrix}$$
  It is the same as (\ref{eq:lambda}) that
  \begin{equation*}
      \begin{aligned}
        & \F_G(\X, \lambda^i_*,t) = u_{\lambda^i_*}^H \X_\V(t) = \begin{pmatrix} \alpha^H & a^H & \psi^H \end{pmatrix} 
		\begin{pmatrix}
			\X_{\V^c_i} \\ \X_{v^i_*} \\ \X_{\V_i}
		\end{pmatrix} \\
		& = 
		\begin{pmatrix}
			\alpha^H & a^H & \psi^H 
		\end{pmatrix} 
		\begin{pmatrix}
			-({\U_{{\V_i}^c,\Lambda^i_0} ^H})^{-1} \U_{\V_i,\Lambda^i_0}^H \\ I
		\end{pmatrix}
		\begin{pmatrix}
			\X_{v^i_*} \\ \X_{\V_i}
		\end{pmatrix}.
      \end{aligned}
  \end{equation*}
	Thus we obtain
	\begin{equation*}
	    \begin{aligned}
	    E_{\V^i_0} & = \begin{pmatrix} \alpha^H & a^H & \psi^H \end{pmatrix} 
	\begin{pmatrix}
		-({\U_{{\V_i}^c,\Lambda^i_0}^H})^{-1} \U_{\V_i, \Lambda^i_0}^H \\ I
	\end{pmatrix}\\
    & = \begin{pmatrix}
		\alpha^H & a^H & \psi^H 
	\end{pmatrix}
	\begin{pmatrix}
		-({\U_{{\V_i}^c, \Lambda^i_0}^H})^{-1}
		\begin{pmatrix}
			\beta & C^H  
		\end{pmatrix}   
		\\I
	\end{pmatrix}
	    \end{aligned}
	\end{equation*}
	where the first element is
	$$ E_{\V_i}(v^i_*) = -\alpha^H ({\U_{{\V_i}^c, \Lambda^i_0}^H})^{-1} \beta + a^H \ne 0. $$
\end{proof}

\begin{proposition}
    The admissible sequence must exist.
\end{proposition}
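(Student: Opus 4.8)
The plan is to build the admissible sequence explicitly by induction on $i$, exploiting the nested structure of the index sets. By the space-division construction one has $\Lambda^{i-1}_0 = \Lambda^i_0 \cup \{\lambda^i_*\}$, so $\Lambda^k_0 \subset \Lambda^{k-1}_0 \subset \cdots \subset \Lambda^0_0$ is a strictly decreasing chain that drops a single eigenvalue at each ascending step. The first move is to recast uniqueness in purely linear-algebraic terms. Since $|\V_i| + |\Lambda^i_0| = N$, the matrix $\U_{\V_i^c, \Lambda^i_0}$ is square, and $\V_i \in \mathcal{U}(\Lambda^i_0)$ precisely when it is invertible, \emph{i.e.} when the rows of $\U_{\V, \Lambda^i_0}$ indexed by $\V_i^c$ are linearly independent. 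The crucial observation is that $\U_{\V, \Lambda^i_0}$ always has full column rank $|\Lambda^i_0|$, because its columns are columns of the unitary matrix $\U$ and are therefore orthonormal.

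For the base case I would produce $\V_0$ directly. As $\U_{\V, \Lambda^0_0}$ has column rank $|\Lambda^0_0|$, its row space has dimension $|\Lambda^0_0|$, so some collection of $|\Lambda^0_0|$ of its rows is linearly independent; declaring these rows to be $\V_0^c$ makes $\U_{\V_0^c, \Lambda^0_0}$ an invertible square matrix, whence $\V_0 \in \mathcal{U}(\Lambda^0_0)$ (and $\V_0 = \V$ in the degenerate case $\Lambda^0_0 = \emptyset$).

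For the inductive step, assume $\V_{i-1} \in \mathcal{U}(\Lambda^{i-1}_0)$, so that $P := \U_{\V_{i-1}^c, \Lambda^{i-1}_0}$ is invertible of size $n \times n$ with $n = |\Lambda^{i-1}_0|$. Delete the column indexed by $\lambda^i_*$, which is legitimate because $\lambda^i_* \in \Lambda^{i-1}_0$; the remaining $n-1$ columns stay linearly independent, so the resulting $n \times (n-1)$ matrix has rank $n-1$. Its row rank is then also $n-1$, so one of its $n$ rows can be discarded while the other $n-1$ rows remain linearly independent; name the discarded row $v^i_* \in \V_{i-1}^c$. Putting $\V_i = \V_{i-1} \cup \{v^i_*\}$ gives $\V_i^c = \V_{i-1}^c \setminus \{v^i_*\}$ and leaves $\U_{\V_i^c, \Lambda^i_0}$ invertible, so $\V_i \in \mathcal{U}(\Lambda^i_0)$ and $\V_i \setminus \V_{i-1} = \{v^i_*\}$; both conditions of Definition \ref{def:sequ} hold.

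The main obstacle is the simultaneity built into the definition: a uniqueness set for an individual $\Lambda^i_0$ is trivial to find, but the sets must fit together into an increasing chain with single-vertex increments. The resolution, and the technical heart of the argument, is the rank bookkeeping of the inductive step — deleting one column from an invertible matrix leaves the remaining columns independent, dropping the rank by exactly one, so a single compatible row can always be dropped to restore invertibility, and this row is forced to lie in $\V_{i-1}^c$, \emph{i.e.} it is a genuinely new vertex. Iterating from $\V_0$ up to $\V_k$ produces the admissible sequence. Together with the preceding proposition, which shows $E_{\V_i}(v^i_*) \ne 0$ for any such chain, this supplies the hypothesis of Lemma \ref{lem:im} at every division step, completing the justification of the recursive reconstruction.
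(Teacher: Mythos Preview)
Your proof is correct and follows essentially the same route as the paper: establish the base case by extracting $|\Lambda^0_0|$ linearly independent rows from the full-column-rank matrix $\U_{\V,\Lambda^0_0}$, then in the inductive step delete the column $\lambda^i_*$ from the invertible square matrix $\U_{\V_{i-1}^c,\Lambda^{i-1}_0}$ and use the resulting rank drop to discard exactly one row, which furnishes $v^i_*$. Your write-up is more explicit about the rank bookkeeping and about why the discarded vertex necessarily lies in $\V_{i-1}^c$, but the argument is the same.
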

\begin{proof}
  Since $\U$ is an invertible matrix, its column vectors are linearly independent, and the rank of $\U_{\V,\lambda^0_0}$ is $|\lambda^0_0|$. Additionally, the row rank of a matrix is equal to the rank, so $|\lambda^0_0|$ linearly independent rows of $\U_{\V,\lambda^0_0}$ can be selected, which form the submatrix of $\U_{\V,\lambda^0_0}$ is an invertible square matrix, denoted as $\U_{\V^c_0, \lambda^0_0}$. Now we obtain $\V_0$, which is the uniqueness set w.r.t. $\Lambda^0_0$.
  
  By induction, suppose that $\V_{i-1}, i=1, 2, \dots, k$ is defined. Since $\V_{i-1}$ is the uniqueness set w.r.t. $\Lambda^{i-1}_0$, $\U_{\V^c_{i-1}, \Lambda^{i-1}_0}$ is invertible. Thus, the rank of $\U_{\V^c_{i-1}, \Lambda^{i}_0}$ is $|\lambda^{i}_0|$ and the submatrix $\U_{\V^c_i, \lambda^i_0}$ is invertible. Then $\V_i$ is defined.
\end{proof}

\begin{theorem}
\label{th:minr}
    For a signal space with equal bandwidth $\W_{\B_\V, \B_\Lambda}$, let $\V_0 \subset \V_1 \subset \V_2 \subset \cdots \subset \V_k$ be an admissible sequence. Then the minimum sampling rate required to perfectly recover $\X \in \W_{\B_\V, \B_\Lambda}$ is 
    \begin{equation*}
		r(\V, \Lambda) = r^*_0 + 2\sum^k_{i=1} \B_{\Lambda}(\lambda^i_*).
	\end{equation*}
\end{theorem}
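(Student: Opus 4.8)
The plan is to prove the two matching bounds separately: that the rate $r(\V,\Lambda)$ is achievable (the constructive direction) and that no sampling set supporting stable recovery can have rate below $r(\V,\Lambda)$ (the converse). Both hinge on reading off the total number of degrees of freedom of $\W_{\B_\V,\B_\Lambda}$ per unit time, which the admissible sequence exposes cleanly. Concretely, chaining Lemma~\ref{lem:ker}, Lemma~\ref{lem:Yi} and Lemma~\ref{lem:im} along $\W_{\B_\V,\B^0_\Lambda}\subseteq\cdots\subseteq\W_{\B_\V,\B^k_\Lambda}=\W_{\B_\V,\B_\Lambda}$ realizes the top space as a (non-orthogonal) direct sum $\W_{\B_\V,\B^0_\Lambda}\oplus Y_1\oplus\cdots\oplus Y_k$, where each $Y_i\cong{\rm im}(Q_{\lambda^i_*})$ is a bandlimited subspace of $L^2(\R)$ of bandwidth $\B^i_\Lambda(\lambda^i_*)=\B_\Lambda(\lambda^i_*)$.

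For achievability I would make the prose reconstruction of Section~\ref{sec:sampling} quantitative. First sample the simple-bandwidth part on the uniqueness set $\V_0$ (with $|\V_0|=N-|\Lambda^0_0|$) at the Nyquist rate $2\B_\V$ per vertex; by the simple-bandwidth Theorem this recovers $\X^0\in\W_{\B_\V,\B^0_\Lambda}$ at cost $r^*_0$. Then, recursively for $i=1,\dots,k$, having reconstructed $\X^{i-1}$ everywhere on $\V\times\R$ through the uniqueness relation~(\ref{eq:uniq}), add samples only on the new vertex $v^i_*$ at rate $2\B_\Lambda(\lambda^i_*)$; subtracting the already known $\X^{i-1}_{v^i_*}$ from these samples yields samples of $\X'_{v^i_*}$, which by Lemma~\ref{lem:im} together with Proposition~1 ($E_{\V^i_0}(v^i_*)\ne0$) is a signal of bandwidth exactly $\B_\Lambda(\lambda^i_*)$ spanning $Y_i$, hence Shannon-recoverable at that rate. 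Since $\V_0$ and the distinct vertices $v^1_*,\dots,v^k_*$ are pairwise disjoint, the density definition makes the rates add, giving total $r^*_0+2\sum_{i=1}^k\B_\Lambda(\lambda^i_*)=r(\V,\Lambda)$.

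For the converse I would view the inverse GFT as a fixed multichannel map whose inputs are the $N-|\Lambda_0|$ nonzero spectral components $\F_G(\X,\lambda,t)$, with $\Lambda_0=\{\lambda:\B_\Lambda(\lambda)=0\}$, and whose outputs are the vertex signals actually being sampled. Because the GFT is a bijection, stable recovery of $\X$ is equivalent to stable recovery of all these components, so a multiband/MIMO necessary-density bound forces $r(\S)\ge\sum_{\lambda\notin\Lambda_0}2\B_\Lambda(\lambda)$. Enumerating the nonzero components as the $N-|\Lambda^0_0|$ simple directions of bandwidth $\B_\V$ (whose total contribution is exactly $r^*_0$) together with the $k$ activated directions $\lambda^1_*,\dots,\lambda^k_*$ turns this sum into $2(N-|\Lambda^0_0|)\B_\V+2\sum_{i=1}^k\B_\Lambda(\lambda^i_*)=r(\V,\Lambda)$, matching the achievable rate.

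The main obstacle is this converse density bound, not the construction. The samples live in the vertex-time domain while the independent information lives in the spectral domain, and the direct-sum decomposition is explicitly non-orthogonal, so one cannot simply add up per-component Landau densities without justification; the delicate point is to rule out any sampling geometry that exploits the inter-vertex correlation to drop below the sum of component Nyquist rates. I would discharge this by reducing to a stable multiple-input sampling lower bound in the spirit of \cite{MIMO}, verifying that the effective channel has full column rank off $\Lambda_0$ so that the inputs are simultaneously reconstructible precisely when the output samples determine them, and then summing $2\B_\Lambda(\lambda)$ over the nonzero components.
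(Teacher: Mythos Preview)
Your proposal is correct and matches the argument the paper intends: the paper itself defers the proof to Theorem~2 of \cite{ji2020sampling}, but the two-part structure you outline---achievability via the recursive admissible-sequence construction (Lemmas~\ref{lem:ker}--\ref{lem:im}, Proposition~1, and the simple-bandwidth Theorem) and the converse via a multichannel/MIMO necessary-density bound applied to the nonzero spectral components $\F_G(\X,\lambda,\cdot)$---is exactly the machinery Section~\ref{sec:sampling} assembles and is the content of the cited result. Your identification of the converse as the delicate half, and your plan to discharge it by viewing the inverse GFT as a full-rank multi-input channel so that $r(\S)\ge\sum_{\lambda\notin\Lambda_0}2\B_\Lambda(\lambda)$, is the same reduction used there.
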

\begin{proof}
  See Theorem 2 in \cite{ji2020sampling} for details.
\end{proof}

\subsection{Sampling of general CTVGS}

For general CTVGS, the bandwidths at the vertices are often different, so we can first filter the original signal, get the part with the same bandwidth and the remaining part. The part with the same bandwidth is in the signal space with equal bandwidth, which can be sampled by the above method. The remaining part also can be divided into signals with the same bandwidth and the remaining one by frequency shift and filtering. Repeat the operations to divide the original CTVGS into multiple signals in the space with equal bandwidth. Finally, add up the signals to get the original CTVGS.

Suppose that the output signal of each filter is $\X_j \in \W_{\B_{\V'_j}, \B_{\Lambda'_j}}$. According to Theorem \ref{th:minr}, the minimum sampling rate required for the perfect reconstruction of $\X_j$ is $r(\V'_j, \Lambda'_j)$. So we have the following theorem.
\begin{theorem}
    For space $\W_{\B_\V, \B_\Lambda}$, let subspaces after filtering for $J$ times be the space with equal bandwidth, and $\V'_j$ and $\Lambda'_j$ are vertex set and graph frequency set of the $j$-th subspace, respectively. The minimum sampling rate required for perfectly recovering the signal in $\W_{\B_\V, \B_\Lambda}$ is
    $$ r = \sum^J_{j=1} r(\V'_j, \Lambda'_j). $$
\end{theorem}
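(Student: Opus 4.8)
The plan is to prove the equality by establishing two matching bounds: an achievability bound $r \le \sum_{j=1}^J r(\V'_j, \Lambda'_j)$, exhibited by a concrete scheme, and a converse bound $r \ge \sum_{j=1}^J r(\V'_j, \Lambda'_j)$, showing that no scheme of smaller rate can recover every $\X \in \W_{\B_\V, \B_\Lambda}$. The bulk of the work is already done by Theorem \ref{th:minr}, which pins down the rate of each equal-bandwidth piece; the new content is the additivity.

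For achievability I would first make the filtering decomposition explicit. The $J$ filters have disjoint passbands along the time-frequency axis, so they split $\X$ into components $\X_j \in \W_{\B_{\V'_j}, \B_{\Lambda'_j}}$ with $\X = \sum_{j=1}^J \X_j$, and the subspaces $\W_{\B_{\V'_j}, \B_{\Lambda'_j}}$ have mutually disjoint spectral supports. By Theorem \ref{th:minr} each $\X_j$ admits a sampling set $\S_j$ of rate $r(\V'_j, \Lambda'_j)$ from which $\X_j$ is perfectly reconstructed. Applying the $j$-th filter to $\X$ before sampling isolates $\X_j$, so sampling that filter output on $\S_j$, reconstructing via Theorem \ref{th:minr}, and summing over $j$ recovers $\X$. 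The total rate is $\sum_{j=1}^J r(\V'_j, \Lambda'_j)$ by additivity of the counting measure in the definition of $r(\S)$.

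The converse is the main obstacle, because mere injectivity of the sampling map on the whole space only forces the rate to exceed $\max_j r(\V'_j, \Lambda'_j)$, whereas the claim is the sum. The extra ingredient is the spectral disjointness of the components: $\W_{\B_\V, \B_\Lambda} = \bigoplus_{j=1}^J \W_{\B_{\V'_j}, \B_{\Lambda'_j}}$ is an orthogonal decomposition whose summands occupy non-overlapping frequency bands (two signals with disjoint time-frequency support are orthogonal in $L^2(\V \times \R)$ regardless of their active vertex sets). I would push this through a Landau/Beurling lower-density count, equivalently the MIMO-channel lower bound underlying Theorem \ref{th:minr}: the minimal rate equals the total spectral measure, the number of active vertices weighted by their occupied bandwidth, and because the bands are disjoint this measure is additive over $j$. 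Concretely, one counts the degrees of freedom of $\W_{\B_\V, \B_\Lambda}$ on a window $[-t,t]$; the disjoint-support components contribute independently, so the count is $\big(\sum_{j=1}^J r(\V'_j, \Lambda'_j)\big)\, t + o(t)$, and any stable sampling set must supply at least this many measurements, giving $r(\S) \ge \sum_{j=1}^J r(\V'_j, \Lambda'_j)$.

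Combining the two bounds yields $r = \sum_{j=1}^J r(\V'_j, \Lambda'_j)$. The delicate point to verify is that the filtering genuinely produces orthogonal, spectrally disjoint equal-bandwidth pieces, so that both the union-of-schemes construction and the additive density count are valid; once this disjointness is secured the result reduces to invoking Theorem \ref{th:minr} on each piece and summing, exactly as anticipated by the preceding discussion and by \cite{ji2020sampling}.
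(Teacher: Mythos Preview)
Your proposal is correct and, for the achievability direction, follows exactly the paper's own reasoning: decompose $\X$ by filtering into equal-bandwidth pieces $\X_j\in\W_{\B_{\V'_j},\B_{\Lambda'_j}}$, invoke Theorem~\ref{th:minr} on each piece, and sum the rates. The paper in fact presents no formal proof of this theorem beyond the paragraph preceding it, which describes the filtering procedure and then simply asserts the result ``according to Theorem~\ref{th:minr}''.

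Where you go beyond the paper is in supplying a converse. The paper's discussion establishes only that the rate $\sum_j r(\V'_j,\Lambda'_j)$ is \emph{sufficient}; it never argues necessity, even though the theorem claims a \emph{minimum} rate. Your Landau/Beurling degrees-of-freedom count, leaning on the spectral disjointness of the filtered pieces to make the lower bound additive, is the natural way to close this gap and is consistent with the MIMO-style lower bound underlying Theorem~\ref{th:minr} (via \cite{ji2020sampling}). So your argument is not a different route so much as a completion of the paper's sketch: same decomposition and same appeal to Theorem~\ref{th:minr}, plus the missing lower-bound half.
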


\section{Simulation}
\label{sec:exp}

Since it is hard to simulate continuous-time signals directly, in this section, we use a discrete data set as the raw data for simulation. Our proposed sampling scheme focuses on determining the sampling rate based on the spectrum of CTVGS. When a sequence is recorded at a sufficient rate, whose spectrum is extended periodically. The spectrum of the sequence within one period is the same as that of the continuous-time signal. Therefore, we construct a TVGS with a discrete data set for simulation and are able to illustrate the feasibility of our proposed sampling scheme.

We simulate 4 channels of real electroencephalogram (EEG) signals, and each channel is considered a vertex. Every 1024 consecutive timesteps of the EEG data were recorded as one signal. Ultimately, we simulate TVGS with a size of $N = 4$ vertices, and a discrete sequence of 1024 timesteps relates to each vertex.

To construct the graph construction, we calculate the covariance matrix of the de-meaned TVGS, which is the adjacency matrix $\A$\cite{GFTW}. The signal is a short time, so we consider its adjacency matrix not varying with time. We then perform an eigenvalue decomposition of $\A$ to obtain the graph Fourier bases.

We pass 100 signals through a low-pass filter to obtain TVGS in space with equal bandwidth, \emph{i.e.}, $\X \in \W_{\B_\V, \B_\Lambda}$, and the filter bandwidth is $\B_\V$. By adjusting the filter, we performed 10 experiments with different bandwidth. The reconstructed TVGS is noted as $\hat{\X}$.

To measure the difference between signals $\X_a$ and $\X_b$, we define the normalized Root Mean Square Error (NRMSE)
$$ \text{NRMSE}(\X_a,\X_b) = \frac{ \| \X_a-\X_b \|_2 }{ \| \X_a \|_2 }. $$
The results obtained by our proposed scheme are compared and analyzed with those obtained by the existing separation sampling scheme. 

    \begin{figure}[htbp]
    \centering    
    \subfigure[]
    {
	    \includegraphics[scale=0.29]{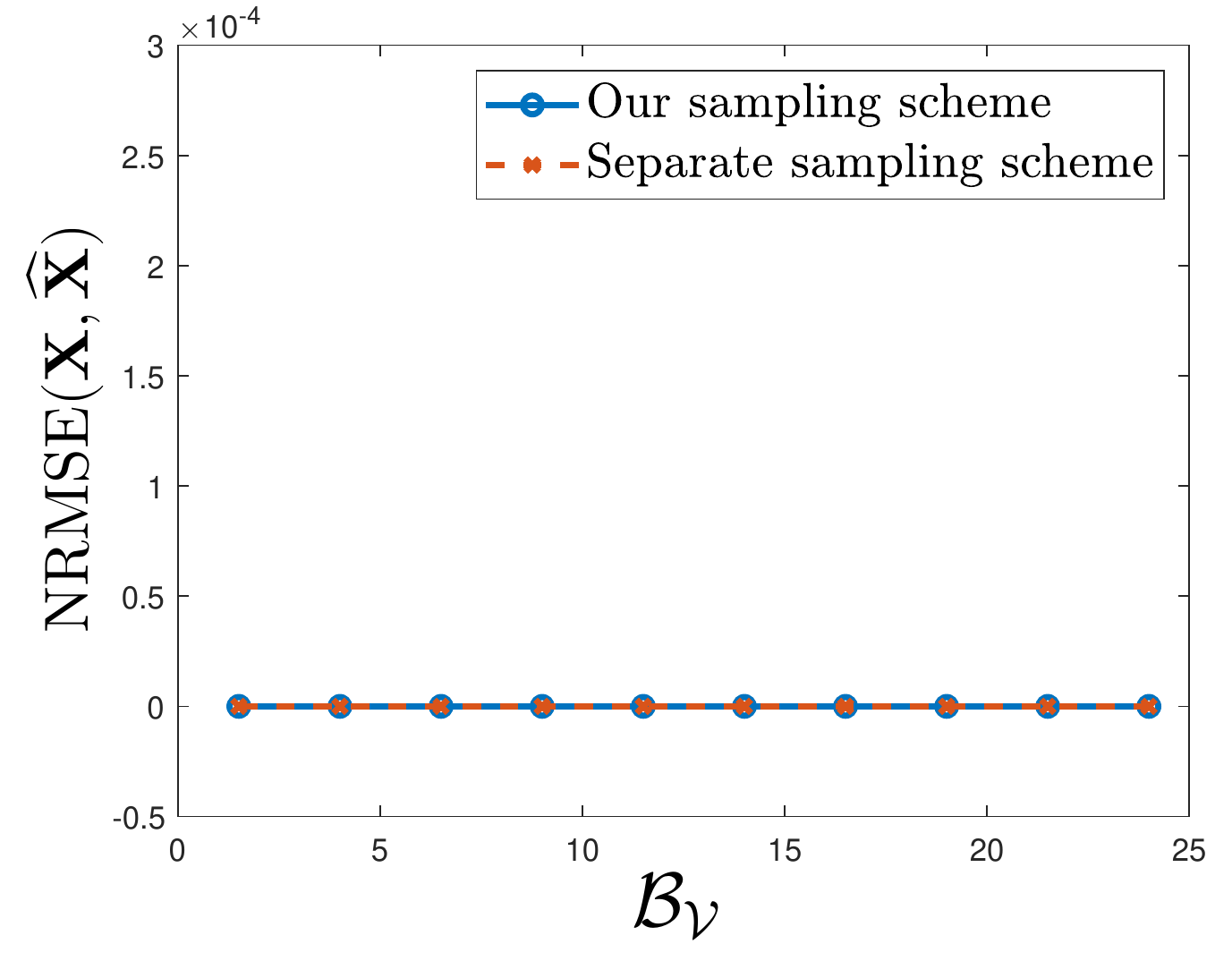}   
    }
    \subfigure[]
    {
	    \includegraphics[scale=0.29]{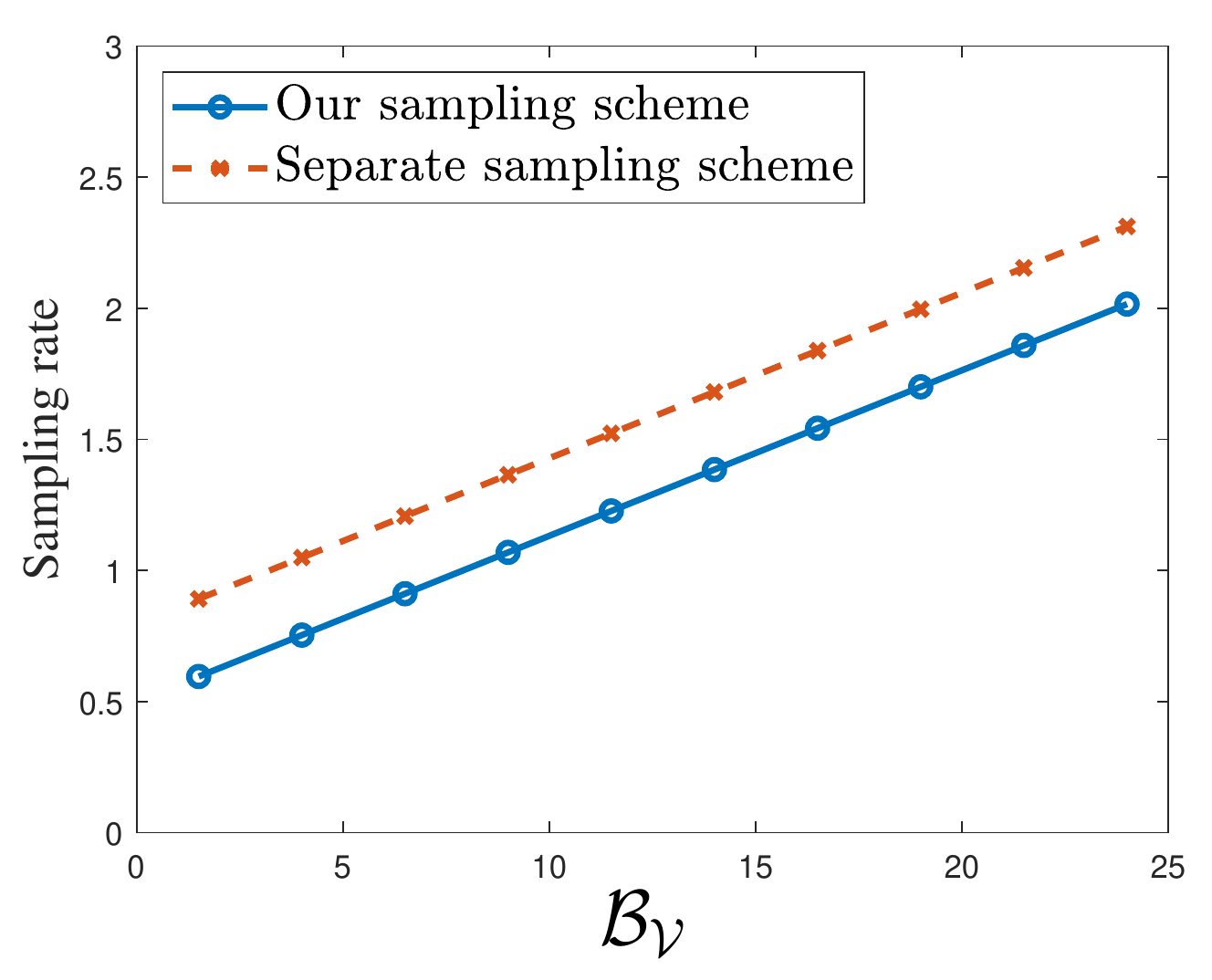}   
    }
    \caption{Comparison of the mean value of the results of 100 experiments. (a) is the variation curves of NRMSE with different $\B_\V$. (b) is the variation curves of sampling rate with different $\B_\V$.}
    \label{fig:exp} 
    \end{figure}

Both our method and the separate sampling scheme can recover the bandlimited signals perfectly, shown in Fig \ref{fig:exp} (a). And regardless of the bandwidth, our method can sample and reconstruct at a lower sampling rate, shown in Fig \ref{fig:exp} (b).

\section{Conclusion}
\label{sec:conclusion}

In this paper, we combine PCA and the sampling of multiple bandlimited signals by the time-vertex graph signal processing framework. Based on the idea of space division, we prove the sampling theory of CTVGS. In addition, the method of sampling at the lowest sampling rate is proposed. In the future, we will consider issues such as the confidence level of dimension reduction when the graph construction change over time.

\bibliographystyle{IEEEbib}
\bibliography{paper}

\end{document}